\def\al{\alpha}
\def\be{\beta}
 \def\Ga{\Gamma}
\def\te{\theta} 
\def\om{\omega}
\def\wh{\widehat}
\def\fracpd#1#2{\frac{\partial #1}{\partial #2}}
\begin{document}

\allowdisplaybreaks

\renewcommand{\thefootnote}{$\star$}

\newcommand{\arXivNumber}{1509.07493}

\renewcommand{\PaperNumber}{010}

\FirstPageHeading

\ShortArticleName{Quasi-Bi-Hamiltonian Structures of the 2-Dimensional Kepler Problem}

\ArticleName{Quasi-Bi-Hamiltonian Structures\\ of the 2-Dimensional Kepler Problem\footnote{This paper is a~contribution to the Special Issue
on Analytical Mechanics and Dif\/ferential Geometry in honour of Sergio Benenti.
The full collection is available at \href{http://www.emis.de/journals/SIGMA/Benenti.html}{http://www.emis.de/journals/SIGMA/Benenti.html}}}

 \Author{Jose F.~{CARI\~NENA} and Manuel F.~{RA\~NADA}}
 \AuthorNameForHeading{J.F.~Cari\~nena and M.F.~Ra\~nada}
 \Address{Departamento de F\'{\i}sica Te\'orica and IUMA, Universidad de Zaragoza, 50009 Zaragoza, Spain}
 \Email{\href{mailto:jfc@unizar.es}{jfc@unizar.es}, \href{mailto:mfran@unizar.es}{mfran@unizar.es}}

\ArticleDates{Received September 29, 2015, in f\/inal form January 25, 2016; Published online January 27, 2016}

\Abstract{The existence of quasi-bi-Hamiltonian structures for the Kepler problem is studied.
We f\/irst relate the superintegrability of the system with the existence of two complex functions
endowed with very interesting Poisson bracket properties and then we prove the existence
of a quasi-bi-Hamiltonian structure by making use of these two functions.
The paper can be considered as divided in two parts.
In the f\/irst part a quasi-bi-Hamiltonian structure is obtained by making use of polar coordinates
and in the second part a new quasi-bi-Hamiltonian structure is obtained by making use of the
separability of the system in parabolic coordinates.}

\Keywords{Kepler problem; superintegrability; complex structures; bi-Hamiltonian structures; quasi-bi-Hamiltonian structures}

\Classification{37J15; 37J35; 70H06; 70H33}

\renewcommand{\thefootnote}{\arabic{footnote}}
\setcounter{footnote}{0}

\section{Introduction}

\subsection{A geometric introduction}

In dif\/ferential geometric terms, the phase space $M$ of a Hamiltonian system
is the $2n$-dimensional cotangent bundle $M=T^*Q$ of the $n$-dimensional
conf\/iguration space~$Q$.
Cotangent bundles are manifolds endowed, in a natural or canonical
way, with a symplectic structure~$\om_0$; if $\{(q_i) \,|\, i=2,\ldots,n\}$ are local
coordinates in $Q$ and
$\{(q_j,p_j);\,j=1,2,\dots,n\}$ the induced coordinates in $T^*Q$, then $\om_0$ is given by
\begin{gather*}
 \om_0 = dq_j \wedge dp_j , \qquad \om_0 = - d {\te_0}
 , \qquad \te_0 = p_j dq_j ,
\end{gather*}
(we write all the indices as subscripts and summation convention on the repeated index is used).
Given a dif\/ferentiable function $F$ in $T^*Q$, $F=F(q,p)$, the vector f\/ield~$X_F$ uniquely
def\/ined as the solution of the equation
\begin{gather*}
 i(X_F) \om_0 = dF
\end{gather*}
is called the Hamiltonian vector f\/ield of the function $F$.
In particular, given a Hamiltonian $H=H(q,p)$, the dynamics is given by the corresponding
Hamiltonian vector f\/ield~$X_H$, that is, $i(X_H) \om_0 = dH$.

 A vector f\/ield $\Gamma\in\mathfrak{X}(T^*Q)$ is Hamiltonian if there is a function $H$ such that $\Ga=X_H$, i.e., $i(\Ga)\om_0=dH$, and locally-Hamiltonian when $i(\Ga)\omega_0$ is a closed 1-form.
This is equivalent to $\Ga$-invariance of~$ \omega_0$, i.e., $\mathcal{L}_\Ga\omega_0=0$.

A system of dif\/ferential equations is called bi-Hamiltonian if it can be written in
two dif\/ferent ways in Hamiltonian form.
Suppose a manifold $M$ equipped with two dif\/ferent symplectic structures $\om_0$ and $\om_1$.
A vector f\/ield $\Ga$ on $T^*Q$ is said to be a bi-Hamiltonian vector f\/ield if
it is Hamiltonian with respect to both symplectic structures, i.e.,
\begin{gather*}
 i(\Ga) \om_0 = d H_0 \qquad{\rm and}\qquad i(\Ga) \om_1 = d H_1 .
\end{gather*}
The two functions, $H_0$ and $H_1$, are integrals of motion for $\Gamma$.
A weaker form of bi-Hamiltonian system is when the only symplectic form is the f\/irst one
($\om_1$ is a closed but nonsymplectic 2-form).

We point out that an important example of bi-Hamiltonian system is the rational harmonic oscillator (non-central harmonic oscillator with rational ratio of frequencies)~\cite{CarMarRan02}
\begin{gather*}
 H = \frac12 \bigl( p_x^2 + p_y^2 \bigr) + \frac12 \alpha_0^2\bigl( m^2 x^2 + n^2 y^2 \bigr) .
\end{gather*}

A symplectic form determines a Poisson bivector $\Lambda$ that satisf\/ies the vanishing of the Schouten bracket $[\Lambda,\Lambda]=0$ (this property is equivalent to the Jacobi identity); we note that there exist non-constant rank Poisson structure not related with symplectic structures.
 The compatibility condition between two dif\/ferent Poisson structures,
 $\Lambda_0$ and $\Lambda_1$, means that the linear combination $\Lambda_{\lambda} = \Lambda_0 - {\lambda} \Lambda_1$ is a Poisson pencil, that is, it is a~Poisson bivector for every value of ${\lambda}$;
therefore the corresponding bracket $\{\cdot,\cdot\}_{\lambda}=\{\cdot,\cdot\}_0 - {\lambda} \{\cdot,\cdot\}_1$ is a pencil of Poisson brackets.

Bi-Hamiltonian systems are systems endowed with very interesting properties but, in general, it is quite dif\/f\/icult to f\/ind a bi-Hamiltonian formulation for a~given Hamiltonian vector f\/ield, and for this reason it is useful to introduce the concept of quasi-bi-Hamiltonian system.

\begin{definition}\label{definition1}
A vector f\/ield $X$ on a symplectic manifold $(M,\omega)$ is called quasi-Hamiltonian if there exists a (nowhere-vanishing) function~$\mu$ such that~$\mu X$ is a Hamiltonian vector f\/ield
\begin{gather*}
 \mu X \in {\mathfrak X}_H(M).
\end{gather*}
Thus $i(\mu X)\omega = dh$ for some function $h$.
\end{definition}

We call $\mu$ an integrating factor of the quasi-Hamiltonian system, because it is an integrating factor for the 1-form $i(X)\omega$, and we note that in this case the function~$h$ is a f\/irst integral of~$X$.
Note that this condition can alternatively be written as as $i(X)(\mu \omega)=dh$, but the point is that the 2-form $\mu\omega$ is not closed in the general case.

The scarcity of bi-Hamiltonian systems leads to relax the condition for the vector f\/ield being Hamiltonian to a simpler situation of quasi-Hamiltonian with respect to the second symplectic structure.

\begin{definition}\label{definition2}
A Hamiltonian vector f\/ield $X$ on a symplectic manifold $(M,\omega)$ is called quasi-bi-Hamiltonian if there exist another symplectic structure~$\omega_1$, and a nowhere-vanishing func\-tion~$\mu$, such that $\mu X$ is a Hamiltonian vector f\/ield with respect to~$\omega_1$.
\end{definition}

This concept was f\/irst introduced in \cite{BrouCaboz96} in the particular case of systems with two degrees of freedom
and it was quickly extended in \cite{MorTondo97,MorTondo98} for a~higher-dimensional systems.
Some recent papers considering properties of this particular class of systems are
\cite{Blasz09, BoualBrouzRak06, BoualBrouz08, BrouCaboz96, CarGuhRan07, CarGuhRan08, CarMarRan02, CrampSar01, CrampSar02, MorTondo97, MorTondo98, Rakot11, ZengMa99}.

The nondegeneracy of the canonical form $\om_0$ provides a vector bundle isomorphism $\wh{\om_0}$ of $T(T^*Q)$ on $T^*(T^*Q)$, inducing an identif\/ication of vector f\/ields and 1-forms on the phase space.
A consequence is that the pair $(\omega_0,\omega_1)$ determines a $(1,1)$ tensor f\/ield $R$ def\/ined as
$R = \wh{\om_0}^{-1} \circ \wh{\omega_{1}}$, that is,
\begin{gather*}
 \omega_1(X,Y) = \omega_0(RX,Y) , \qquad \forall \, X,Y \in \mathfrak{X}(T^*Q) .
\end{gather*}
Note that in the def\/inition of $R$ the 2-form $\om_1$ is necessarily closed but it can be nonsymplectic.
If $\Ga$ is bi-Hamiltonian with respect to $(\omega_0,\omega_1)$ then $R$ is $\Gamma$-invariant, that is, ${\mathcal L}_{\Gamma}R = 0$, where~${\mathcal L}$ denotes the Lie derivative (this means that the characteristic polynomial of~$R$ is an inva\-riant for~$\Ga$, and consequently the coef\/f\/icients of the polynomial are constants of motion).
The Nijenhuis tensor $N_R$ of the tensor f\/ield $R$ is def\/ined by
\begin{gather*}
 N_R(X,Y) = R^2[X,Y] + [RX,RY] - R[RX,Y] - R[X,RY] .
\end{gather*}
It has been proved that if $\Ga$ is a Hamiltonian dynamical system of the type described above and such that
(i) The tensor $N_R$ of $R$ vanishes,
(ii) The tensor f\/ield $R$ has~$n$ distinct eigenfunctions (that is, they are maximally distinct),
then the eigenfunctions of~$R$ are in involution and the system is therefore completely integrable \cite{CrMarRub83, CrampSarTh00, DeFVilMar84, Fernandes94, Marmo83}.
It is important to note that the eigenvalues of $R$ are constants of motion for $\Ga$ even in the case that the two properties~(i) and~(ii) are not satisf\/ied (but then the eigenfunctions are not in involution).

It is known that the Liouville formalism characterize the Hamiltonians that are integrable but it does not provide a method for obtaining the integrals of motion; therefore it has been necessary to elaborate dif\/ferent methods for obtaining constants of motion (Hamilton--Jacobi separability, Lax pairs formalism, Noether symmetries, Hidden symmetries, etc); the existence of a bi-Hamiltonian structure with the above two mentioned properties (Nijenhuis torsion condition and maximally distinct eigenvalues) can be considered as method to establish the Liouville integrability of a system; because of this, these two properties are frequently included in the def\/inition of a bi-Hamiltonian structure.

Most of systems admitting a bi-Hamiltonian structure are separable systems; so these two properties (separability and double Hamiltonian structure) are properties very close related (see~\cite{Fernandes94} for a detailed discussion of this question and~\cite{TempTond12} for the case of multiple separability).
Quasi-bi-Hamiltonian systems are very less known than the bi-Hamiltonian ones but it seems that they are also related with separability.
Nevertheless, in this case the tensor f\/ield~$R$ is not $\Gamma$-invariant and the eigenvalues of the tensor f\/ield $R$ are not constants of motion.
An interesting property is that it was shown in~\cite{BrouCaboz96} that in the particular case of two degrees of freedom the function~$\mu^2/\det R$ is a constant of the motion.

The potential of the Kepler problem is spherically symmetric and therefore it admits alternative Lagrangians (the existence of alternative Lagrangians for central potentials is studied in \cite{CrampPrin88, HenShep82, RanJmp91}); recall that if there exist alternative Lagrangian descriptions then one can f\/ind non-Noether constants of motion \cite{CarIb83}.
This system has been studied as a bi-Hamilltonian system by making use of dif\/ferent approaches;
 Rauch-Wojciechowski proved the existence of a~bi-Hamiltonian formulation but introducing an extra variable so that the phase space is odd-dimensional and the Poisson brackets are degenerate~\cite{RWojcie91} and more recently~\cite{GrigTsi15} a~bi-Hamiltonian formulation for the perturbed Kepler problem has also been studied by making use of Delaunay-type variables.

\subsection{Structure and purpose of the paper}

Now in this paper we will analyze certain properties of the Kepler problem related with the existence of quasi-bi-Hamiltonian structures.
The following two points summarize the contents of the paper.
\begin{itemize}\itemsep=0pt
\item
First, we will study the existence of certain complex functions with interesting Poisson properties and
then we will prove that the superintegrability of the system is directly related with the properties of these complex functions.
As is well-known this system is multi-separable (it separates in both polar and parabolic coordinates); so
we f\/irst present the study making use of polar coordinates and then we undertake a similar study by employing parabolic coordinates (the parabolic complex functions are dif\/ferent from the polar ones).

\item
Second, we prove that the above mentioned complex functions determine the existence of several quasi-bi-Hamiltonian structures. This is done in two steps: f\/irst with complex 2-forms and then with several real 2-forms. The properties of these geometric structures and of the associated recursion operators are analyzed.

 It is important to note that this study is concerned with the existence of quasi-bi-Hamil\-tonian structures (instead of bi-Hamiltonian). So we recall that if
$ i(\Gamma)\omega_1={\lambda}dH_1$ then ${\mathcal X}_{\Gamma}\omega_1
 = d{\lambda}\wedge dH_1\ne 0 $.
Consequently the tensor f\/ield $R$ is not $\Gamma$-invariant and the eigenvalues of $R$ are not constants of motion.
\end{itemize}

We must clearly say that the structures obtained by this method (wedge product of the dif\/ferentials of complex functions) do not satisfy the above mentioned Nijenhuis torsion condition.
So perhaps it is convenient to name them as weak quasi-bi-Hamiltonian structures (in opposition to strong structures satisfying the Nijenhuis condition).
Nevertheless, the purpose in this paper is not to prove the integrability of a system as consequence of a bi-Hamiltonian structure, since it is perfectly known that the Kepler problem is not only integrable but also superintegrable. The purpose is to study new and interesting properties of the Kepler problem.
 In fact, it has been proved that if a dynamical vector f\/ield satisf\/ies certain properties (existence of canonoid transformations~\cite{CarRan88, CarRan90} or existence of non-symplectic symmetries~\cite{Ran00,Ran05}) then it is Hamiltonian with respect to two dif\/ferent structures without satisfying necessarily the Nijenhuis condition.

 The structure of the paper is as follows:
 In Section~\ref{section2} we study the Kepler problem by making use of polar coordinates $(r,\phi)$; we relate the superintegrability of the system with the existence of two complex functions~$M_r $ and~$N_{\phi}$ endowed with very interesting Poisson bracket properties and then we prove the existence of a~quasi-bi-Hamiltonian structure making use of these two functions.
 Then in Section~\ref{section3} we consider once more the same system but in terms of parabolic coordinates~$(a,b)$ and we obtain a new quasi-bi-Hamiltonian structure (dif\/ferent to the previous one) making use of a similar technique but with new complex functions~$M_a$ and~$M_b$. Finally in Section~\ref{section4} we make some f\/inal comments.

\section{From superintegrability to quasi-bi-Hamiltonian structures}\label{section2}

After these rather general comments we restrict our study to the Kepler problem in the Euclidean plane that, as it is well known, is superintegrable and multiseparable (polar and parabolic coordinates).

 Let us f\/irst notice that in some cases the two-dimensional Euclidean systems possess certain interesting properties.
For example, if the potential $V(x,y)$ takes the form $V=A(u)+B(v)$, $u=x+y$, $v=x-y$, then it admits a new Hamiltonian structure (and also a Lax pair) \cite{McLSmir00}; unfortunately the new structure is in most of cases constant (we mean that the new Poisson bracket is determined by a symplectic form as $\om_1 = dx \wedge dp_y + dy \wedge dp_x$).

 In the next paragraphs we will relate the existence of bi-Hamiltonian structures with the properties of the two complex functions with interesting Poisson bracket properties.

\subsection{Complex functions and superintegrabilty}

It is well known that the Hamiltonian of the two-dimensional Kepler problem
\begin{gather*}
 H_{K} = \frac12 \left( p_r^2 + \frac{p_{\phi}^2}{r^2} \right) + V_{K},
 \qquad V_K =-\frac{g}{r} , \qquad 0<g\in\mathbb{R},
\end{gather*}
is Hamilton--Jacobi (H-J) separable in polar coordinates $(r,\phi)$ and it is, therefore,
Liouville integrable with the angular momentum $J_2=p_\phi$ as the associate constant of motion.
Moreover, it is also known that it is a super-integrable system with the two components of the
Laplace--Runge--Lenz vector as additional integrals of motion.
Now we will prove that this property of superintegrability can be related to the existence of certain
complex functions with interesting Poisson bracket properties.

Let us denote by $M_{r j}$ and $N_{\phi j}$, $j=1,2$, the following real functions
\begin{gather*}
 M_{r1} = p_r p_\phi ,\qquad M_{r2} = g - \frac{p_\phi^2}{r} ,
\qquad \text{and}\qquad
 N_{\phi1} = \cos \phi ,\qquad N_{\phi2} = \sin \phi .
\end{gather*}
Then we have the following properties
\begin{alignat*}{4}
& \textrm{(i)}\quad&& \frac{d}{d t} M_{r1} = \{M_{r1}, H_K\} = - \lambda M_{r2}
,\qquad &&
 \frac{d}{d t} M_{r2} = \{M_{r2}, H_K\} = \lambda M_{r1} ,&
\\
& \textrm{(ii)}\quad && \frac{d}{d t} N_{\phi1} = \{N_{\phi1}, H_K\} = - \lambda N_{\phi2}
,\qquad&&
 \frac{d}{d t} N_{\phi2} = \{N_{\phi2}, H_K\} = \lambda N_{\phi1} ,&
\end{alignat*}
where $\lambda$ denote the following function
\begin{gather*}
 \lambda = \frac{p_\phi}{r^2} .
\end{gather*}
The property (ii), representing the behaviour of the angular functions $N_{\phi j}$, is true for all the central potentials~$V(r)$; but the property~(i), behaviour of the functions~$M_{r j}$, is specif\/ic of the potential of the Kepler problem.

 Consider next the complexif\/ication of the linear space of functions on the manifold and extend by bilinearity the Poisson bracket. If we denote~$M_r$ and $N_\phi$ the complex functions
\begin{gather*}
 M_r = M_{r1} + i M_{r2} ,\qquad N_\phi = N_{\phi1} + i N_{\phi2} ,
\end{gather*}
then they have the following properties
\begin{gather*}
 \{M_{r}, H_K\} = i \lambda M_{r} ,\qquad
 \{N_{\phi}, H_K\} = i \lambda N_{\phi} ,
\end{gather*}
and consequently the Poisson bracket of $M_r N_\phi^{*}$ with the Kepler Hamiltonian vanishes
\begin{gather*}
 \{M_r N_\phi^{*} , H_K\} = \{M_r , H_K\} N_\phi^{*}
 + M_r \{N_\phi^{*} , H_K\}
 = (i \lambda M_{r} ) N_\phi^{*}
 + M_r ( -i \lambda N_{\phi}^* ) = 0 .
\end{gather*}

We can summarize this result in the following proposition.

\begin{proposition} \label{proposition1}
Let us consider the Hamiltonian of the Kepler problem
\begin{gather*}
 H_{K} = \frac12 \left( p_r^2 + \frac{p_{\phi}^2}{r^2} \right) + V_{K},
 \qquad V_K =-\frac{g}{r} ,
\end{gather*}
Then, the complex function $J_{34}$ defined as
\begin{gather*}
 J_{34} = M_r N_\phi^{*}
\end{gather*}
is a $($complex$)$ constant of the motion.
\end{proposition}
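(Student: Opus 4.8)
The plan is to obtain $J_{34}=M_r N_\phi^{*}$ as a constant of the motion directly from the derivation (Leibniz) property of the complexified Poisson bracket, once the two eigenvalue-type relations $\{M_r,H_K\}=i\lambda M_r$ and $\{N_\phi,H_K\}=i\lambda N_\phi$ are in hand. The bracket has been extended by bilinearity to complex-valued functions and remains a derivation in each argument, so combining it with those two relations reduces the proposition to a one-line cancellation.

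Concretely, first I would take complex conjugates: since $H_K$ and $\lambda$ are real, conjugating $\{N_\phi,H_K\}=i\lambda N_\phi$ gives $\{N_\phi^{*},H_K\}=-i\lambda N_\phi^{*}$. Then, applying the Leibniz rule in the first slot,
\begin{gather*}
 \{M_r N_\phi^{*},H_K\}=\{M_r,H_K\}N_\phi^{*}+M_r\{N_\phi^{*},H_K\}=(i\lambda M_r)N_\phi^{*}+M_r(-i\lambda N_\phi^{*})=0 .
\end{gather*}
Hence $J_{34}$ Poisson-commutes with $H_K$ and is a constant of the motion. This step is routine; the genuine content lies in the two complex relations, equivalently the real relations (i) and (ii).

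For the angular functions $N_{\phi j}$ (relation (ii)), the hard part is essentially absent: since $\cos\phi$ and $\sin\phi$ depend only on $\phi$, the bracket with $H_K$ retains only the $\partial_\phi\cdot\partial_{p_\phi}$ term and produces the factor $\partial H_K/\partial p_\phi=p_\phi/r^2=\lambda$, which is why (ii) holds for every central potential $V(r)$. The delicate step, and the main obstacle, is relation (i) for $M_{r1}=p_r p_\phi$ and $M_{r2}=g-p_\phi^2/r$, which is special to the Kepler potential. Here I would exploit that $p_\phi$ is the conserved angular momentum, so $\{p_\phi,H_K\}=0$, which collapses $\{M_{r1},H_K\}=p_\phi\{p_r,H_K\}=-p_\phi\,\partial H_K/\partial r$. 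Evaluating $\partial H_K/\partial r=-p_\phi^2/r^3+g/r^2$ for $V_K=-g/r$ reproduces exactly $-\lambda M_{r2}$, and the analogous computation of $\{M_{r2},H_K\}$ yields $\lambda M_{r1}$. It is precisely the $1/r$ form of the potential that makes $M_{r1}$ and $M_{r2}$ close linearly into the stated rotation-type $2\times 2$ system with coefficient $\lambda$; any other central potential would spoil relation (i) and hence the conservation of $J_{34}$.
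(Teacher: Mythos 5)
Your proposal is correct and follows essentially the same route as the paper: the conjugation of $\{N_\phi,H_K\}=i\lambda N_\phi$ together with the Leibniz rule gives exactly the displayed cancellation $\{M_r N_\phi^{*},H_K\}=(i\lambda M_r)N_\phi^{*}+M_r(-i\lambda N_\phi^{*})=0$ that the paper uses. Your explicit verification of the underlying relations (i) and (ii), which the paper merely asserts, is accurate and correctly isolates where the $1/r$ form of the potential enters.
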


Of course $J_{34}$ determines two real f\/irst-integrals
\begin{gather*}
 J_{34} = J_3 + i J_4 ,\qquad
 \{J_3, H_K \} = 0 ,\qquad \{J_4, H_K \} = 0 ,
\end{gather*}
whose coordinate expressions turn out to be
\begin{gather*}
 J_3 = {\rm Re}(J_{34}) = p_r p_\phi \cos\phi - \frac{p_\phi^2}{r} \sin\phi + g \sin\phi ,\\
 J_4 = {\rm Im}(J_{34}) = p_r p_\phi \sin\phi + \frac{p_\phi^2}{r} \cos\phi - g \cos\phi .
\end{gather*}
That is, the two functions $J_3$ and $J_4$ are just the two components of the two-dimensional
Laplace--Runge--Lenz vector.

Summarizing, we have got two interesting properties.
First, the superintegrability of the Kepler problem is directly related with the existence of two complex functions whose Poisson brackets with the Hamiltonian are proportional with a common complex factor to themselves,
and second, the two components of the Laplace--Runge--Lenz vector appear as the real and imaginary
parts of the complex f\/irst-integral of motion.
Remark that $ N_\phi$ is a~complex function of constant modulus one, while the modulus of~$ M_r$ is a polynomial of degree four in the momenta given by
\begin{gather*}
 M_r M_r^* = ( p_r p_\phi )^2 + \left( g - \frac{p_\phi^2}{r} \right)^2
 = 2 p_\phi^2 H_K + g^2 .
\end{gather*}

\subsection{Complex functions and quasi-bi-Hamiltonian structures}

Let us denote by $Y_{34}$ the (complex) Hamiltonian vector f\/ield of $J_{34}$
\begin{gather*}
 i(Y_{34}) \omega_0 = d J_{34} ,
\end{gather*}
that obviously satisf\/ies $Y_{34}(H_K) = \{H_K,J_{34}\} = 0$,
and by $Y_{r}$ and $Y_\phi$ the Hamiltonian vector f\/ields of~$M_r$ and~$N_\phi$:
\begin{gather*}
 i(Y_{r}) \omega_0 = d M_{r} ,\qquad i(Y_{\phi}) \omega_0 = d N_{\phi} .
\end{gather*}
Their local coordinate expressions are, respectively, given by
\begin{gather*}
 Y_r = p_\phi \fracpd{}{r} + \left(p_r - 2 i \frac{p_\phi}{r}\right)\fracpd{}{\phi}
 - i \left(\frac{p_\phi^2}{r^2}\right) \fracpd{}{p_r} ,
\end{gather*}
and
\begin{gather*}
 Y_\phi = (\sin\phi - i \cos\phi) \fracpd{}{p_\phi} .
\end{gather*}
Then, the vector f\/ield $Y_{34}$ appears as a linear combination of $Y_{r}$ and $Y_\phi^*$; more specif\/ically we have
\begin{gather*}
 Y_{34} = N_\phi^* Y_r + M_r Y_\phi^* = Y + Y' ,\qquad
 Y = N_\phi^* Y_r ,\qquad Y' = M_r Y_\phi^* .
\end{gather*}
The vector f\/ield $Y_{34}$ is certainly a symmetry of the Hamiltonian system $(T^*Q,\omega_0,H_K)$,
but the two vector f\/ields, $Y$ and $Y'$, are neither symmetries of the symplectic form~$\omega_0$
(that is, ${\mathcal X}_{Y} \om_0\ne 0$ and ${\mathcal X}_{Y'}\om_0\ne 0$) nor symmetries of the Hamiltonian (that is, ${\mathcal X}_{Y}H_K\ne 0$ and ${\mathcal X}_{Y'}H_K\ne 0$).
Moreover, remark that they are not symmetries of the dynamics, because
\begin{gather*}
 [Y,\Ga_K] \ne 0 ,\qquad [Y',\Ga_K] \ne 0 ,\qquad i(\Ga_{K}) \omega_0 = d H_K.
\end{gather*}
More specif\/ically:

\begin{proposition} \label{proposition2}
The Lie bracket of the dynamical vector field $\Gamma_K$ with $Y$ is given by
\begin{gather*}
 [\Ga_K, Y] = i J_{34} X_{\lambda},
\end{gather*}
where $X_{\lambda}$ is the Hamiltonian vector field of the function $\lambda$.
\end{proposition}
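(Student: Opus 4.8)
The plan is to expand the bracket with the Leibniz rule for a vector field scaled by a function and then reduce everything to brackets of Hamiltonian vector fields, for which there is a clean formula. Writing $Y = N_\phi^* Y_r$, the decomposition
\begin{gather*}
 [\Gamma_K, N_\phi^* Y_r] = \bigl(\Gamma_K N_\phi^*\bigr) Y_r + N_\phi^* [\Gamma_K, Y_r]
\end{gather*}
splits the computation into two pieces. The first is immediate from the behaviour of the angular function: since $\Gamma_K g = \{g, H_K\}$ and both $H_K$ and $\lambda$ are real, conjugating the relation $\{N_\phi, H_K\} = i\lambda N_\phi$ established above gives $\Gamma_K N_\phi^* = \{N_\phi^*, H_K\} = -i\lambda N_\phi^*$.

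For the second piece I note that both $\Gamma_K = X_{H_K}$ and $Y_r = X_{M_r}$ are Hamiltonian, so I invoke the standard fact that the Lie bracket of two Hamiltonian vector fields is again Hamiltonian, with generating function the Poisson bracket of the two generators (the sign being fixed by the convention $i(X_F)\omega_0 = dF$ used in the paper). This yields $[\Gamma_K, Y_r] = X_{\{M_r, H_K\}}$, and by the Kepler-specific property $\{M_r, H_K\} = i\lambda M_r$ it becomes $i X_{\lambda M_r}$. I then apply the product rule for Hamiltonian vector fields, $X_{fg} = f X_g + g X_f$, to write $X_{\lambda M_r} = \lambda Y_r + M_r X_\lambda$.

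Assembling the two contributions gives
\begin{gather*}
 [\Gamma_K, Y] = -i\lambda N_\phi^* Y_r + N_\phi^* \bigl( i\lambda Y_r + i M_r X_\lambda \bigr),
\end{gather*}
in which the two terms proportional to $\lambda N_\phi^* Y_r$ cancel exactly, leaving $i M_r N_\phi^* X_\lambda = i J_{34} X_\lambda$, since $J_{34} = M_r N_\phi^*$ by Proposition~\ref{proposition1}. This is the claimed identity.

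The main difficulty is bookkeeping rather than conceptual: one must keep the sign conventions consistent, both in the bracket-of-Hamiltonian-fields formula and in the complex conjugation of property~(ii), where the reality of $\lambda$ and $H_K$ is essential. The decisive point is that the term $-i\lambda N_\phi^* Y_r$ coming from differentiating $N_\phi^*$ is precisely the negative of the $\lambda$-part of $N_\phi^*[\Gamma_K,Y_r]$, so these cancel and the bracket collapses onto the single direction $X_\lambda$. As an independent check, the formula can also be verified directly from the explicit coordinate expressions of $Y_r$, $Y_\phi$ and $\Gamma_K$.
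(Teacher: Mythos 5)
Your proof is correct and follows essentially the same route as the paper: the same Leibniz decomposition $[\Gamma_K, N_\phi^*Y_r]=(\Gamma_K N_\phi^*)Y_r+N_\phi^*[\Gamma_K,Y_r]$, the same use of $[X_f,X_g]=-X_{\{f,g\}}$ (you write the equivalent $[\Gamma_K,Y_r]=X_{\{M_r,H_K\}}$), the same product rule $X_{\lambda M_r}=\lambda Y_r+M_r X_\lambda$, and the same cancellation of the $\lambda N_\phi^* Y_r$ terms. Your explicit remark that the conjugation step relies on the reality of $\lambda$ and $H_K$ is a small but welcome addition that the paper leaves implicit.
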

\begin{proof}
A direct computation leads to
\begin{gather*}
 [\Ga_K,Y] = \Ga_K( N_\phi^*) Y_r + N_\phi^* [\Ga_K,Y_r]
 = -i {\lambda} N_\phi^* Y_r + N_\phi^*\bigl({-}X_{\{H_K,M_r\}}\bigr) .
\end{gather*}
where we have used that the Lie bracket of two Hamiltonian vector f\/ields satisf\/ies $[X_f,X_g] = -X_{\{f,g\}}$.
Note also that the Hamiltonian vector f\/ield of a~product~$fg$ is given by $X_{fg}=f X_g + g X_f$,
and then the above Lie bracket becomes
\begin{gather*}
 [\Ga_K,Y] = -i {\lambda} N_\phi^* Y_r + i N_\phi^* (X_{{\lambda}M_r} )
 = -i {\lambda} N_\phi^* Y_r + i N_\phi^* ({\lambda} Y_r + M_r X_{\lambda} )
 = i ( M_r N_\phi^* )X_{\lambda}.\tag*{\qed}
\end{gather*}
\renewcommand{\qed}{}
\end{proof}

The vector f\/ield $X_{\lambda}$ on the right hand side represents an obstruction for~$Y$ to be a dynamical symmetry. Only when~${\lambda}$ be a numerical constant the vector f\/ield~$Y$ (and also~$Y'$) is a dynamical symmetry of~$\Gamma_K$.

In the following $\Omega$ will denote the complex 2-form def\/ined as
\begin{gather*}
 \Omega = dM_r \wedge dN_\phi^* .
\end{gather*}
The two complex 2-forms $\om_Y$ and $\om_Y'$ obtained by Lie derivative of $\om_0$, i.e.,
\begin{gather*}
 {\mathcal L}_{Y} \om_0 = \om_Y ,\qquad {\mathcal L}_{Y'} \om_0 = \om_Y' ,
\end{gather*}
are such
\begin{gather*}
 {\mathcal L}_{Y} \om_0 = i_Y (d\om_0) + d(i_Y\om_0) = d(i_Y\om_0) = d (N_\phi^* dM_r )
 =-\Omega,
\\
 {\mathcal L}_{Y'} \om_0 = i_{Y'} (d\om_0) + d(i_{Y'}\om_0) = d(i_{Y'}\om_0) = d (M_r dN_\phi^* ) = \Omega.
\end{gather*}

Using the preceding results we can prove:

\begin{proposition} \label{proposition3}
The Hamiltonian vector field $\Ga_K$ of the Kepler problem is a quasi-Hamiltonian system with
respect to the complex $2$-form $\Omega$.
\end{proposition}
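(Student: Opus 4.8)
The plan is to compute the contraction $i(\Gamma_K)\Omega$ directly and to show that it is, up to the scalar factor $\lambda$, an exact 1-form. Since $\Omega = dM_r \wedge dN_\phi^*$ is a wedge of two exact 1-forms, the interior product with $\Gamma_K$ reads
\[
i(\Gamma_K)\Omega = \bigl(\Gamma_K(M_r)\bigr)\,dN_\phi^* - \bigl(\Gamma_K(N_\phi^*)\bigr)\,dM_r,
\]
so that everything reduces to knowing how $\Gamma_K$ acts on $M_r$ and on $N_\phi^*$. But these actions are exactly the derivatives already recorded in the preceding paragraphs: $\Gamma_K(M_r) = \{M_r, H_K\} = i\lambda M_r$, and taking complex conjugates in $\{N_\phi, H_K\} = i\lambda N_\phi$ (with $\lambda$ and $H_K$ real) gives $\Gamma_K(N_\phi^*) = \{N_\phi^*, H_K\} = -i\lambda N_\phi^*$.

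First I would substitute these two relations into the contraction and factor out $i\lambda$, obtaining
\[
i(\Gamma_K)\Omega = i\lambda\bigl(M_r\,dN_\phi^* + N_\phi^*\,dM_r\bigr).
\]
The key observation is that the bracketed combination is precisely the Leibniz expansion of $d(M_r N_\phi^*)$, so it collapses into $dJ_{34}$ with $J_{34} = M_r N_\phi^*$ as in Proposition~\ref{proposition1}. Hence
\[
i(\Gamma_K)\Omega = i\lambda\, dJ_{34}.
\]

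To conclude the quasi-Hamiltonian property in the sense of Definition~\ref{definition1}, I would then exhibit an integrating factor. Taking the nowhere-vanishing function $\mu = 1/(i\lambda) = -i\,r^2/p_\phi$ immediately yields $i(\mu\Gamma_K)\Omega = dJ_{34}$, so that $\mu\Gamma_K$ is Hamiltonian with respect to $\Omega$ and the associated function is $h = J_{34}$. This is consistent with the general remark following Definition~\ref{definition1}, namely that $h$ must be a first integral of $\Gamma_K$, since $J_{34}$ is exactly the complex constant of motion established in Proposition~\ref{proposition1}.

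There is no serious obstacle here: the argument is essentially the single computation of $i(\Gamma_K)\Omega$, and its only genuine content is recognizing the Leibniz collapse into $dJ_{34}$. The one point requiring some care is the choice of $\mu$, since $\lambda = p_\phi/r^2$ vanishes on the zero-angular-momentum locus; the integrating factor $1/(i\lambda)$ is nowhere vanishing but is defined only where $\lambda \neq 0$, which is precisely the natural domain for the polar-coordinate description of the Kepler problem.
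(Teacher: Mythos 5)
Your proof is correct and follows essentially the same route as the paper: contract $\Gamma_K$ with $\Omega = dM_r \wedge dN_\phi^*$, substitute $\Gamma_K(M_r)=i\lambda M_r$ and $\Gamma_K(N_\phi^*)=-i\lambda N_\phi^*$, and recognize the Leibniz collapse into $i\lambda\, d(M_rN_\phi^*)$. Your explicit exhibition of the integrating factor $\mu=1/(i\lambda)$ and the remark about the locus $\lambda=0$ are reasonable additions that the paper leaves implicit.
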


\begin{proof}
 The contraction of the vector f\/ield $ \Ga_K$ with the complex 2-form $\Omega$ gives
\begin{gather*}
 i(\Ga_K) \Omega = \Ga_K (M_r) dN_\phi^* - \Ga_K(N_\phi^*) dM_r ,
\end{gather*}
and recalling that
\begin{gather*}
 \Ga_K (M_r) = \{M_{r}, H_K\} = i \lambda M_{r}
,\qquad
 \Ga_K (N_\phi^*) = \{N_\phi^*, H_K\} = - i \lambda N_\phi^* ,
\end{gather*}
we arrive to
\begin{gather*}
 i(\Ga_K) \Omega = ( i \lambda M_{r}) dN_\phi^* + (i \lambda N_\phi^*) dM_r
 = i \lambda d(M_rN_\phi^*) .\tag*{\qed}
\end{gather*}
\renewcommand{\qed}{}
\end{proof}

The complex 2-form $\Omega$ can be written as
\begin{gather*}
 \Omega = \Omega_1 + i \Omega_2
\end{gather*}
 where the two real 2-forms, $\Omega_1 = {\rm Re}(\Omega)$ and $\Omega_2={\rm Im}(\Omega)$,
 take the form
\begin{gather*}
 \Omega_1 = d M_{r1} \wedge dN_{\phi1} + d M_{r2} \wedge dN_{\phi2}
= d (p_r p_\phi ) \wedge d ( \cos\phi )
 + d\left(g- \frac{p_\phi^2}{r}\right) \wedge d ( \sin\phi ) \\
\hphantom{\Omega_1}{}
 = \al_{12} dr \wedge d\phi + \al_{23} d\phi \wedge dp_r + \al_{24} d\phi \wedge dp_\phi,
\\
 \Omega_2 = -d M_{r1} \wedge dN_{\phi2} + d M_{r2} \wedge dN_{\phi1}
= - d (p_r p_\phi ) \wedge d ( \sin\phi )
 + d\left(g- \frac{p_\phi^2}{r}\right) \wedge d ( \cos\phi ) \\
\hphantom{\Omega_2}{}
= \beta_{12} dr \wedge d\phi + \beta_{23} d\phi \wedge dp_r + \beta_{24} d\phi \wedge dp_\phi
\end{gather*}
with $\alpha_{ij}$ and $\beta_{ij}$ being given by
\begin{gather*}
 \al_{12} = \left( \frac{p_\phi^2}{r^2}\right)\cos\phi ,\qquad
 \al_{23} = p_\phi \sin\phi ,\qquad
 \al_{24} = p_r\sin\phi + 2 \left( \frac{p_\phi}{r}\right) \cos\phi ,
\end{gather*}
and
\begin{gather*}
 \beta_{12} = -\left( \frac{p_\phi^2}{r^2}\right) \sin\phi ,\qquad
 \beta_{23} = p_\phi \cos\phi ,\qquad
 \beta_{24} = p_r\cos\phi - 2 \left( \frac{p_\phi}{r}\right) \sin\phi .
\end{gather*}
Then we have
\begin{gather*}
 i(\Ga_K) \Omega_1 = -\lambda dJ_4 ,\qquad
 i(\Ga_K) \Omega_2 = \lambda dJ_3 ,
\end{gather*}
what means that $\Ga_K$ is also quasi-bi-Hamiltonian with respect to the two real 2-forms $(\om_0,\Omega_1)$ or $(\om_0,\Omega_2)$.

Remark that the complex 2-form $\Omega$ is well def\/ined but it is not symplectic.
In fact, from the above expressions in coordinates we have
$\Omega_1 \wedge \Omega_1 = 0$,
$\Omega_2 \wedge \Omega_2 = 0$, and
$\Omega_1 \wedge \Omega_2 = 0$,
and therefore we obtain
\begin{gather*}
 \Omega \wedge \Omega = (\Omega_1 \wedge \Omega_1 - \Omega_2 \wedge \Omega_2 ) + 2 i \Omega_1 \wedge \Omega_2 = 0 .
\end{gather*}

The distribution def\/ined by the kernel of $\Omega$, that is two-dimensional, is given by
\begin{gather*}
 \operatorname{Ker} \Omega = \big\{ f_1 Z_1+ f_2 Z_2 \, |\, f_1,f_2\colon \mathbb{R}^2\times\mathbb{R}^2\to\mathbb{C} \big\},
\end{gather*}
where the vector f\/ields $Z_1$ and $Z_2$ are
\begin{gather*}
Z_1 = (\al_{23} + i \be_{23}) \fracpd{}{r} + (\al_{12} + i \be_{12}) \fracpd{}{p_r} ,\qquad
Z_2 = (\al_{24} + i \be_{24}) \fracpd{}{r} + (\al_{12} + i \be_{12}) \fracpd{}{p_\phi} .
\end{gather*}
Therefore it satisf\/ies
\begin{gather*}
 [\operatorname{Ker} \Omega,\Ga_K] \subset \operatorname{Ker} \Omega.
\end{gather*}
 That is, $\Ga_K$ preserves the distribution $\operatorname{Ker} \Omega$.

If $Y_3$ and $Y_4$ are the Hamiltonian vector f\/ields (with respect to the canonical symplectic form~$\om_0$) of the f\/irst integrals $J_3$ and $J_4$, then the dynamical vector f\/ield~$\Ga_K$ is orthogonal to~$Y_4$ with respect to the structure~$\Omega_1$ and it is also orthogonal to $Y_3$ with respect to the structure~$\Omega_2$, that is,
\begin{gather*}
 i(\Ga_K) i(Y_4) \Omega_1 = 0 , \qquad i(\Ga_K) i(Y_3) \Omega_2 = 0 .
\end{gather*}

Just to close the section we remark that had we applied this technique to the isotropic
two-dimensional harmonic oscillator with frequency $\alpha$ we had obtained the function~$M_r$ as
\begin{gather*}
 M_r = \left(\frac{2}{r} p_r p_\phi \right) + i \left( p_r^2 - \frac{p_\phi^2}{r^2} + \al^2 r^2 \right),
\end{gather*}
(the angular function $N_\phi$ would be the same) and the constants so obtained are but the components of the Fradkin tensor.
This shows that the harmonic oscillator is an example of dynamical system both bi-Hamiltonian and quasi-bi-Hamiltonian.

\subsection{Recursion operators and some comments} \label{section23}

The bi-Hamiltonian structure $(\om_0,\Omega)$ determines a complex recursion operator $
R$ def\/ined as
\begin{gather*}
 \Omega(X,Y) = \om_0(RX,Y) ,\qquad \forall\, X,Y \in \mathfrak{X}(T^*Q) .
\end{gather*}
But as $\Omega$ and $R$ are complex, we can introduce two real recursion operator
$R_{1}$ and $R_{2}$ def\/ined as
\begin{gather*}
 \Omega_{1}(X,Y) = \om_0(R_{1}X,Y) ,\qquad \Omega_{2}(X,Y) = \om_0(R_{2}X,Y) .
\end{gather*}
We recall that $\wh{\om_0}$ is the map $\wh{\om_0}\colon \mathfrak{X}(T^*Q)\to\wedge^1(T^*Q)$
given by contraction, that is $\wh{\om_0}(X) = i(X)\om_0$, and then the nondegenerate character
of $\om_0$ means that the map $\wh{\om_0}$ is a bijection.
Using this notation we can write the two operators $R_{1}$ and $R_{2}$ as follows
\begin{gather*}
 R_{1} = \wh{\om_0}^{-1} \circ \wh{\Omega_{1}} ,\qquad
 R_{2} = \wh{\om_0}^{-1} \circ \wh{\Omega_{2}} .
\end{gather*}
Then we have the following properties
\begin{itemize}\itemsep=0pt
\item[(i)] The coordinates expressions of $R_{1}$ and $R_{2}$ are
\begin{gather*}
 R_{1} = -\al_{12} \fracpd{}{p_\phi} \otimes dr
 + \left[\al_{23} \fracpd{}{r} + \al_{24}\fracpd{}{\phi}+ \al_{12}\fracpd{}{p_r}\right] \otimes d\phi\\
 \hphantom{R_{1} =}{}
 + \al_{23} \fracpd{}{p_\phi} \otimes dp_r
 + \al_{24} \fracpd{}{p_\phi} \otimes dp_\phi
\end{gather*}
and
\begin{gather*}
 R_{2} = -\be_{12} \fracpd{}{p_\phi} \otimes dr
 + \left[\be_{23} \fracpd{}{r} + \be_{24}\fracpd{}{\phi}+ \be_{12}\fracpd{}{p_r}\right] \otimes d\phi\\
 \hphantom{R_{2} =}{}
 + \be_{23} \fracpd{}{p_\phi} \otimes dp_r
 + \be_{24} \fracpd{}{p_\phi} \otimes dp_\phi .
\end{gather*}

\item[(ii)] $R_{1}$ and $R_{2}$ have two dif\/ferent eigenvalues doubly degenerate and one of them is null (that is, $\lambda_1=\lambda_2=0$, $\lambda_3=\lambda_4\ne 0$).
Therefore we have
\begin{gather*}
 \det [R_{1}] =\det [R_{2}] = 0,
\end{gather*}
what is a consequence of the singular character of $\Omega_1$ and $\Omega_2$.
\end{itemize}

 We close this section summarizing the situation we have arrived.
We have f\/irst introduced two complex functions, $M_r$ and $N_\phi$, mainly because
of the behaviour of their Poisson brackets.
Then we have proved that they are interesting for two reasons: f\/irst because they determine the existence of superintegrability (existence of additional constants of motion) and second because they determine quasi-bi-Hamiltonian structures (f\/irst complex $(\om_0,\Omega)$ and then real $(\om_0,\Omega_1,\Omega_2)$).

 Concerning the f\/irst point, in this case the additional constants of motion are just the components of the Runge--Lenz vector (that have been highly studied making use of dif\/ferent approaches).
Now we have arrived to a new property: they can also be obtained as a consequence of this complex formalism.

 Concerning the second point, the two complex functions $M_r$ and $N_\phi$ determine the
above mentioned geometric structures (f\/irst complex and then real) but unfortunately they are de\-ge\-nerated
(we recall that $\Omega_1 \wedge \Omega_1 = 0$, $\Omega_2 \wedge \Omega_2 = 0$).
This can be considered as a limitation of these geometric structures.
If a bi-Hamiltonians structure satisf\/ies all the appropriate properties (that is, symplectic forms, vanishing of the Nijenhuis torsion of the recursion opera\-tor~$R$, diagonalizable recursion opera\-tor~$R$ with functionally independent real eigenvalues) then it determines the Liouville integrability of the system.
In fact, the aim of the approach presented in this paper is not to prove the integrability of a system as consequence of a bi-Hamiltonian structure as we start with a system known to be not only integrable but also superintegrable.
The existence of $(\Omega_1,\Omega_2)$ must be considered, not as a method for arriving to the integrability of the system, but as a~new and interesting property of the Kepler problem
(for the moment only of the two-dimensional system, the generalization to the three-dimensional case must be considered as an open question).

\section[New complex functions and new quasi-bi-Hamiltonian structures]{New complex functions and new quasi-bi-Hamiltonian\\ structures}\label{section3}

In this section we will study the existence of new bi-Hamiltonian structures for the Kepler dynamics by making use of parabolic coordinates~$(a,b)$ def\/ined as
\begin{gather*}
 x = \tfrac{1}{2}\big(a^2 - b^2\big) ,\qquad y = a b .
\end{gather*}
Of course all previous results can be translated to this new language in such a way that the functions $M_{r1} $ and $M_{r2}$ are now given by
\begin{gather*}
 M_{r1} =\frac{(a p_b - b p_a) (a p_a + b p_b)}{\sqrt{a^2+b^2}} ,\qquad
 M_{r2} = \frac{(a p_b - b p_a)^2}{\sqrt{a^2+b^2}} - g ,
\end{gather*}
while functions $N_{\phi1}$ and $N_{\phi2}$ become
\begin{gather*}
 N_{\phi 1} = \frac{a^2 - b^2}{\sqrt{a^2+b^2}} ,\qquad N_{\phi 2} = \frac{2 a b}{\sqrt{a^2+b^2}} .
\end{gather*}
But the important point is that the behaviour of the Kepler Hamiltonian in these coordinates will permit us to obtain new results dif\/ferent from the previous ones.

\subsection{Complex functions and superintegrability}

The general form of a natural Euclidean Hamiltonian is
\begin{gather*}
 H = \frac{1}{2m} \left(\frac{p_a^2 + p_b^2}{a^2 + b^2}\right) + V(a,b) .
\end{gather*}
in such a way that if the potential $V$ is of the form
\begin{gather*}
 V(a,b) = \frac{A(a) + B(b)}{a^2 + b^2} ,
\end{gather*}
then the Hamiltonian is Hamilton--Jacobi separable and it is, therefore,
 Liouville integrable with the following quadratic function
\begin{gather*}
 J_2 = \frac{1}{(a^2 + b^2)}(a p_b - b p_a)(a p_b + b p_a) +
 2 \left(\frac{a^2B-b^2A}{a^2+b^2}\right)
\end{gather*}
as the second constant of motion (the f\/irst one is the Hamiltonian itself).

For simplifying the following expressions we introduce the following notation:
\begin{gather*}
 J = a p_b - b p_a ,\qquad
 P_x = \frac{a p_a - b p_b}{a^2+b^2} , \qquad
 P_y = \frac{a p_b + b p_a}{a^2+b^2} . 
\end{gather*}

The Hamiltonian of the Kepler problem when written in parabolic coordinates is
\begin{gather}
 H_{K} = \frac12 \left(\frac{ p_a^2 + p_b^2}{a^2+b^2} \right) + V_{K},
 \qquad V_K =-\frac{g}{a^2+b^2} ,\label{HKparab}
\end{gather}
and the Kepler dynamics is given by the following vector f\/ield
\begin{gather*}
 \Ga_K = \left(\frac{p_a}{a^2+b^2}\right)\fracpd{}{a} +\left(\frac{p_b}{a^2+b^2}\right)\fracpd{}{b}
 + \left(\frac{p_a^2+p_b^2 - 2 g}{(a^2+b^2)^2}\right) a \fracpd{}{p_a}
 + \left(\frac{p_a^2+p_b^2 - 2 g}{(a^2+b^2)^2}\right) b \fracpd{}{p_b} ,
\end{gather*}
in such a way that, as $\omega_0=da \wedge dp_a + db \wedge dp_b$, we have
\begin{gather*}
 i(\Ga_K) (da \wedge dp_a + db \wedge dp_b) = dH_K .
\end{gather*}
The Hamiltonian $H_K$ is Hamilton--Jacobi separable in coordinates $(a,b)$ and the associated quadratic constant of motion is the component $R_x$ of the Laplace--Runge--Lenz vector
\begin{gather*}
 R_x = J P_y - g \left(\frac{a^2 - b^2}{a^2+b^2} \right) .
\end{gather*}

Let us now introduce the functions $M_{a j}$ and $M_{b j}$, $j=1,2$, def\/ined by
\begin{gather*}
 M_{a1} = \frac{J p_a}{\sqrt{a^2+b^2}} ,\qquad M_{a2} = \frac{2 g a- J p_b}{\sqrt{a^2+b^2}} ,
\end{gather*}
and
\begin{gather*}
 M_{b1} = \frac{J p_b}{\sqrt{a^2+b^2}} ,\qquad M_{b2} = \frac{2 g b + J p_a}{\sqrt{a^2+b^2}} .
\end{gather*}
Then, the important property is that the Poisson bracket of the function $M_{a1}$ with $H_K$
is proportional to $M_{a2}$ while the Poisson bracket of $M_{a2}$ with~$H_K$ is proportional
to $M_{a1}$, but with the opposite sign:
\begin{gather*}
 \{M_{a1}, H_K\} = - \lambda M_{a2} ,\qquad
 \{M_{a2}, H_K\} = \lambda M_{a1} ,
\end{gather*}
and the same is true for the functions $M_{b1}$ and $M_{b2}$,
\begin{gather*}
 \{M_{b1}, H_K\} = - \lambda M_{b2} ,\qquad
 \{M_{b2}, H_K\} = \lambda M_{b1} ,
\end{gather*}
where now $\lambda$ denotes the following function
\begin{gather*}
 \lambda = \frac{a p_b - b p_a}{(a^2+b^2)^2} .
\end{gather*}
Therefore the two complex functions $M_a$ and $M_b$ def\/ined as
\begin{gather*}
 M_a = M_{a1} + i M_{a2} ,\qquad M_b = M_{b1} + i M_{b2} ,
\end{gather*}
satisfy
\begin{gather*}
 \{M_a, H_K\} = i \lambda M_{a} ,\qquad
 \{M_b, H_K\} = i \lambda M_{b} .
\end{gather*}

\begin{proposition} \label{proposition4}
The complex function $K_{34}$ defined as
\begin{gather*}
 K_{34} = M_a M_b^{*}
\end{gather*}
is a $($complex$)$ constant of the motion for the dynamics of the Kepler problem described by the Hamiltonian~\eqref{HKparab}.
\end{proposition}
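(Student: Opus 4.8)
The plan is to mimic exactly the argument used for Proposition~\ref{proposition1}, since the situation is structurally identical: both $M_a$ and $M_b$ have Poisson brackets with $H_K$ that are proportional to themselves through the \emph{same} common factor $i\lambda$. The only inputs I will need are the derivation (Leibniz) property of the Poisson bracket, its extension by complex bilinearity, and the already-established relations $\{M_a,H_K\}=i\lambda M_a$ and $\{M_b,H_K\}=i\lambda M_b$ stated just before the proposition.

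First I would apply bilinearity and the Leibniz rule to the product, writing
\begin{gather*}
 \{M_a M_b^{*}, H_K\} = \{M_a, H_K\} M_b^{*} + M_a \{M_b^{*}, H_K\} .
\end{gather*}
The key step is then to handle the conjugated factor. Because both $H_K$ and the function $\lambda = (a p_b - b p_a)/(a^2+b^2)^2$ are real-valued, complex conjugation commutes with taking the Poisson bracket against $H_K$; conjugating $\{M_b, H_K\} = i\lambda M_b$ therefore gives $\{M_b^{*}, H_K\} = -i\lambda M_b^{*}$. Substituting both relations yields the cancellation
\begin{gather*}
 \{M_a M_b^{*}, H_K\} = (i\lambda M_a) M_b^{*} + M_a (-i\lambda M_b^{*}) = i\lambda M_a M_b^{*} - i\lambda M_a M_b^{*} = 0 ,
\end{gather*}
so that $K_{34}=M_a M_b^{*}$ has vanishing bracket with $H_K$ and is a complex constant of the motion.

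The computation itself presents no real obstacle once the two bracket identities are in hand; the genuine content lies upstream, in verifying $\{M_{a1},H_K\}=-\lambda M_{a2}$, $\{M_{a2},H_K\}=\lambda M_{a1}$ and their companions for $M_b$ in parabolic coordinates. Structurally, the crucial feature I would emphasize is that the proportionality factor is the identical real quantity $i\lambda$ for both $M_a$ and $M_b$: forming the product with precisely one factor conjugated produces two contributions that differ only in sign, and any mismatch in the factors would spoil the cancellation. I therefore expect the author's proof to be this same one-line derivation, and the only point worth double-checking is that $\lambda$ is genuinely real, which legitimizes the conjugation step.
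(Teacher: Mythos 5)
Your proposal is correct and is exactly the argument the paper intends: the authors omit the proof of Proposition~\ref{proposition4} precisely because it repeats the Leibniz-rule computation given for Proposition~\ref{proposition1}, using $\{M_a,H_K\}=i\lambda M_a$, $\{M_b^*,H_K\}=-i\lambda M_b^*$ (valid since $H_K$ and $\lambda$ are real) and the resulting cancellation. Your remark that the common real factor $\lambda$ is what makes the conjugation and cancellation work is also the right point to emphasize.
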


We omit the proof because it is quite similar to the proof of the previous Proposition~\ref{proposition1}.

Note that the modulus of the complex functions $M_a$ and $M_b$ are given by
\begin{gather*}
 M_a M_a^* = 2 \big( J^2 H_K - g R_x + g^2\big) ,\qquad
 M_b M_b^* = 2 \big( J^2 H_K + g R_x + g^2\big),
\end{gather*}
and then
\begin{gather*}
 M_a M_a^* + M_b M_b^* = 4 \big( J^2 H_K + g^2\big).
\end{gather*}

Of course the complex function $K_{34}$ determines two real functions that are f\/irst integrals
for the Kepler problem
\begin{gather*}
 K_{34} = K_3 + i K_4 ,\qquad
\{K_3, H_K\} = 0 ,\qquad
\{K_4, H_K\} = 0 ,
\end{gather*}
with $K_3$ and $K_4$ given by
\begin{gather*}
 K_3 = {\rm Re}(K_{34}) = M_{a1}M_{b1} + M_{a2}M_{b2}
 = J P_x + g \left(\frac{2 a b}{a^2+b^2}\right) , \cr
 K_4 = {\rm Im}(K_{34}) = M_{a2}M_{b1} - M_{a1}M_{b2}
= - 2 J^2 H_K .
\end{gather*}
Remark that the function $K_3$ is the other Laplace--Runge--Lenz constant, while $K_4$,
that is a~fourth order polynomial in the momenta, determines the angular momentum $J$ as a factor.

\subsection{Complex functions and quasi-bi-Hamiltonian structures}

First we recall that the complex functions $M_a$ y $M_b$ are given by
\begin{gather*}
 M_{a} = \left( \frac{J p_a}{\sqrt{a^2+b^2}} \right) + i \left(\frac{2 g a- J p_b}{\sqrt{a^2+b^2}} \right)
 , \qquad
 M_{b} = \left( \frac{J p_b}{\sqrt{a^2+b^2}} \right) + i \left(\frac{2 g b + J p_a}{\sqrt{a^2+b^2}}\right) .
\end{gather*}

Let us now denote by $Z_{34}$ the Hamiltonian vector f\/ield of the function $K_{34}$, i.e.,
$ i(Z_{34}) \omega_0 = d Z_{34} $, such that $Z_{34}(H_K) = 0$,
and by $Z_{a}$ and $Z_{b}$ the Hamiltonian vector f\/ields of the complex functions $M_a$ and $M_b$, that is,
\begin{gather*}
 i(Z_{a}) \omega_0 = d M_{a} ,\qquad i(Z_{b}) \omega_0 = d M_{b} .
\end{gather*}
Their coordinate expressions are given by
\begin{gather*}
 Z_a = \left(\fracpd{M_a}{p_a}\right)\fracpd{}{a} + \left(\fracpd{M_a}{p_b}\right)\fracpd{}{b} - \left(\fracpd{M_a}{a}\right)\fracpd{}{p_a} - \left(\fracpd{M_a}{b}\right)\fracpd{}{p_b}
 = Z_{a1} + i Z_{a2} ,
\end{gather*}
with $Z_{a1}$ and $Z_{a2}$ given by
\begin{gather*}
 Z_{a1} = \frac{1}{\sqrt{a^2+b^2}} \left( (a p_b - 2 b p_a ) \fracpd{}{a}
 + a p_a \fracpd{}{b} - \frac{(a p_a + b p_b)}{a^2+b^2}
 \left( b p_a \fracpd{}{p_a} - a p_a \fracpd{}{p_b} \right) \right),
\\
 Z_{a2} = \frac{1}{\sqrt{a^2+b^2}} \left( b p_b \fracpd{}{a}
 + (b p_a -2 a p_b ) \fracpd{}{b} - \frac{(a p_a + b p_b)p_b-2 g b}{a^2+b^2}
 \left( -b \fracpd{}{p_a} + a \fracpd{}{p_b} \right)\right),
\end{gather*}
and
\begin{gather*}
 Z_b = \left(\fracpd{M_b}{p_a}\right)\fracpd{}{a}+ \left(\fracpd{M_b}{p_b}\right)\fracpd{}{b}- \left(\fracpd{M_b}{a}\right)\fracpd{}{p_a}- \left(\fracpd{M_b}{b}\right)\fracpd{}{p_b}
 = Z_{b1} + i Z_{b2} ,
\end{gather*}
with $Z_{b1}$ and $Z_{b2}$ given by
\begin{gather*}
 Z_{b1} = \frac{1}{\sqrt{a^2+b^2}} \left( - b p_b \fracpd{}{a}
 + (2 a p_b - b p_a ) \fracpd{}{b} - \frac{(a p_a + b p_b)}{a^2+b^2}
 \left( b p_b \fracpd{}{p_a} -a p_b \fracpd{}{p_b} \right) \right),
\\
 Z_{b2} = \frac{1}{\sqrt{a^2+b^2}} \left( (a p_b - 2 b p_a ) \fracpd{}{a}
 + a p_a \fracpd{}{b} - \frac{(a p_a + b p_b)p_a-2 g a}{a^2+b^2}
 \left( b \fracpd{}{p_a} -a \fracpd{}{p_b} \right) \right).
\end{gather*}
Now recalling that
\begin{gather*}
 dZ_{34} = d (M_a M_b^* ) = M_b^* d (M_a ) + M_a d (M_b^* ),
\end{gather*}
we obtain
\begin{gather*}
 Z_{34} = M_b^* Z_a + M_a Z_b^* = Z + Z',\qquad \textrm{where}\qquad
 Z = M_b^* Z_a ,\qquad Z' = M_a Z_b^* .
\end{gather*}
The following proposition is similar to that of Proposition~\ref{proposition2} and we omit the proof.

\begin{proposition} \label{proposition5}
The Lie bracket of the Kepler dynamical vector field $\Gamma_K$ with the vector field~$Z$ is given by
\begin{gather*}
 [\Ga_K, Z] = i K_{34} X_{\lambda},
\end{gather*}
where $X_{\lambda}$ is the Hamiltonian vector field of~$\lambda$ solution of the equation $ i(X_{\lambda}) \om_0 = d \lambda$.
\end{proposition}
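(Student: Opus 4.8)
The plan is to mirror the computation that proved Proposition~\ref{proposition2} verbatim, since the parabolic setup has been engineered to have exactly the same algebraic structure as the polar one. First I would write out the Lie bracket using the Leibniz rule for the Lie derivative acting on the product $Z = M_b^* Z_a$, namely
\begin{gather*}
 [\Ga_K, Z] = \Ga_K(M_b^*) Z_a + M_b^* [\Ga_K, Z_a].
\end{gather*}
The first term is immediate from the Poisson-bracket property $\{M_b, H_K\} = i\lambda M_b$ established just before Proposition~\ref{proposition4}: conjugating gives $\Ga_K(M_b^*) = \{M_b^*, H_K\} = -i\lambda M_b^*$, so the first term equals $-i\lambda M_b^* Z_a$.

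Next I would handle the bracket $[\Ga_K, Z_a]$ using the two identities already invoked in the proof of Proposition~\ref{proposition2}: that the Lie bracket of Hamiltonian vector fields satisfies $[X_f, X_g] = -X_{\{f,g\}}$, and that $X_{fg} = f X_g + g X_f$. Since $\Ga_K = X_{H_K}$ and $Z_a = X_{M_a}$, we get $[\Ga_K, Z_a] = -X_{\{H_K, M_a\}} = X_{\{M_a, H_K\}} = X_{i\lambda M_a} = i X_{\lambda M_a}$. Expanding the product via $X_{\lambda M_a} = \lambda X_{M_a} + M_a X_\lambda = \lambda Z_a + M_a X_\lambda$ and multiplying by $M_b^*$ yields
\begin{gather*}
 M_b^*[\Ga_K, Z_a] = i\lambda M_b^* Z_a + i M_a M_b^* X_\lambda.
\end{gather*}

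Adding the two contributions, the $\pm i\lambda M_b^* Z_a$ terms cancel, leaving $[\Ga_K, Z] = i (M_a M_b^*) X_\lambda = i K_{34} X_\lambda$, which is the claim. I do not anticipate a genuine obstacle here, precisely because the whole construction was designed so that $M_a, M_b$ in parabolic coordinates play the roles of $M_r, N_\phi$ in polar coordinates with respect to the single structural property $\{\,\cdot\,, H_K\} = i\lambda(\,\cdot\,)$; the only point requiring minor care is the bookkeeping of signs in the conjugation $\{M_b^*, H_K\} = -i\lambda M_b^*$ and the correct use of $X_{\{M_a,H_K\}} = -X_{\{H_K,M_a\}}$. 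Because the argument is purely formal and identical in form to the polar case, the authors' decision to omit it is fully justified; the substantive content is that the new $\lambda = (ap_b - bp_a)/(a^2+b^2)^2$ is again not a constant, so $X_\lambda \neq 0$ and $Z$ fails to be a genuine symmetry of $\Ga_K$.
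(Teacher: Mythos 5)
Your proof is correct and follows exactly the computation the paper uses for Proposition~\ref{proposition2}, which is precisely the argument the authors intend when they omit the proof of Proposition~\ref{proposition5} as ``similar.'' The Leibniz expansion of $[\Ga_K, M_b^* Z_a]$, the identities $[X_f,X_g]=-X_{\{f,g\}}$ and $X_{fg}=fX_g+gX_f$, and the cancellation of the $\pm i\lambda M_b^* Z_a$ terms all match the paper's route.
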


In the following we will denote by $\Omega_{ab}$ the complex 2-form def\/ined as
\begin{gather*}
 \Omega_{ab} = dM_a \wedge dM_b^* \\
 \hphantom{\Omega_{ab}}{}
 = d\left[ \left( \frac{J p_a}{\sqrt{a^2+b^2}} \right) + i \left(\frac{2 g a- J p_b}{\sqrt{a^2+b^2}} \right) \right] \wedge
 d\left[\left( \frac{J p_b}{\sqrt{a^2+b^2}} \right) - i \left(\frac{2 g b + J p_a}{\sqrt{a^2+b^2}}\right) \right] .
\end{gather*}
Then the two 2-forms $\om_Z$ and $\om_Z'$ obtained by Lie derivation of $\om_0$ with respect to $Z$ and $Z'$ are given by
\begin{gather*}
 {\mathcal L}_{Z} \om_0 = \om_Z = -\Omega_{ab} ,\qquad
 {\mathcal L}_{Z'} \om_0 = \om_Z' = \Omega_{ab}.
\end{gather*}

\begin{proposition} \label{proposition6}
The Hamiltonian vector field $\Ga_K$ of the Kepler problem is quasi-Hamiltonian with respect to the complex
$2$-form $\Omega_{ab}$.
\end{proposition}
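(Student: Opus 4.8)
The plan is to follow verbatim the line of argument used for Proposition~\ref{proposition3}, since $\Omega_{ab}=dM_a\wedge dM_b^*$ plays here the role that $\Omega=dM_r\wedge dN_\phi^*$ played in the polar case, and the functions $M_a$, $M_b$ obey Poisson-bracket relations of exactly the same shape as $M_r$, $N_\phi$. Concretely, I would show that $i(\Ga_K)\Omega_{ab}$ equals an exact form up to a scalar factor, so that a suitable nowhere-vanishing integrating factor turns $\Ga_K$ into a Hamiltonian vector field for $\Omega_{ab}$ in the sense of Definition~\ref{definition1}.

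First I would contract $\Ga_K$ with $\Omega_{ab}$ using the Leibniz rule $i(X)(\sigma\wedge\tau)=\big(i(X)\sigma\big)\tau-\sigma\big(i(X)\tau\big)$ for two $1$-forms $\sigma,\tau$, which gives
\begin{gather*}
 i(\Ga_K)\Omega_{ab}=\Ga_K(M_a)\,dM_b^*-\Ga_K(M_b^*)\,dM_a .
\end{gather*}
Next I would insert the Poisson-bracket identities already established, namely $\Ga_K(M_a)=\{M_a,H_K\}=i\la M_a$ and $\Ga_K(M_b)=\{M_b,H_K\}=i\la M_b$. The one extra algebraic observation needed is that, since both $H_K$ and $\la$ are real-valued, complex conjugation of the second relation yields $\Ga_K(M_b^*)=\{M_b^*,H_K\}=-i\la M_b^*$. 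Substituting these two evaluations and recognizing a Leibniz expansion, the contraction collapses to
\begin{gather*}
 i(\Ga_K)\Omega_{ab}=(i\la M_a)\,dM_b^*+(i\la M_b^*)\,dM_a=i\la\,d\big(M_aM_b^*\big)=i\la\,dK_{34} .
\end{gather*}

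From here the conclusion is immediate: taking the nowhere-vanishing integrating factor $\mu=1/\la$ (wherever $\la=(a p_b-b p_a)/(a^2+b^2)^2$ does not vanish) gives $i(\mu\Ga_K)\Omega_{ab}=d(iK_{34})$, so $\mu\Ga_K$ is Hamiltonian with respect to $\Omega_{ab}$ with Hamiltonian $iK_{34}$ (where $K_{34}=M_aM_b^*$ is the complex first integral of Proposition~\ref{proposition4}). This is precisely the defining property of a quasi-Hamiltonian vector field, and as in the polar case the relevant $2$-form $\Omega_{ab}$ need not be symplectic. I do not anticipate any genuine obstacle: the computation is a direct mirror of Proposition~\ref{proposition3}, and the single point requiring care is the conjugation step producing the sign change $i\la\mapsto-i\la$ in $\{M_b^*,H_K\}$, which is exactly what makes the two $i\la$ contributions \emph{add} (rather than cancel, as they do in the first-integral computation of $\{K_{34},H_K\}$) and lets the common factor $i\la$ be pulled outside the differential.
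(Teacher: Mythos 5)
Your proof is correct and follows essentially the same route as the paper: contract $\Ga_K$ with $dM_a\wedge dM_b^*$, substitute $\Ga_K(M_a)=i\la M_a$ and $\Ga_K(M_b^*)=-i\la M_b^*$, and recognize the Leibniz expansion to obtain $i(\Ga_K)\Omega_{ab}=i\la\, d(M_aM_b^*)$. Your additional remark making the integrating factor $\mu=1/\la$ explicit (valid where $\la\neq 0$) is a harmless elaboration of what the paper leaves implicit.
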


\begin{proof}
 This can be proved by a direct computation
\begin{gather*}
 i(\Ga_K) \Omega_{ab} = \Ga_K (M_a) dM_b^* - \Ga_K(M_b^*) dM_a
 = ( i \lambda M_{a}) dM_b^* + (i \lambda M_b^*) dM_a
 = i \lambda d(M_a M_b^*).\!\!\!\!\!\!\tag*{\qed}
\end{gather*}
\renewcommand{\qed}{}
\end{proof}

 The complex 2-form $ \Omega_{ab} $ can be decomposed as
\begin{gather*}
 \Omega_{ab} = \Omega_{ab1} + i \Omega_{ab2},
\end{gather*}
 where the two real 2-forms, $\Omega_{ab1} = {\rm Re}(\Omega_{ab} )$ and
 $\Omega_{ab2}={\rm Im}(\Omega_{ab})$, take the form
\begin{gather*}
 \Omega_{ab1} = d M_{a1} \wedge dM_{b1} + d M_{a2} \wedge dM_{b2} ,\qquad
 \Omega_{ab2} = -d M_{a1} \wedge dM_{b2} + d M_{a2} \wedge dM_{b1} ,
\end{gather*}
Considering the real and imaginary parts we obtain
\begin{gather*}
 i(\Ga_K) \Omega_{ab1} = -\lambda dK_4 ,\qquad
 i(\Ga_K) \Omega_{ab2} = \lambda dK_3 ,
\end{gather*}
what means that $\Ga_K$ is also quasi-bi-Hamiltonian with respect to the two real 2-forms $(\om_0, \Omega_{ab1})$ and $(\om_0, \Omega_{ab2})$.

Once more we obtain that the factor $\lambda$ determines that the system is quasi-bi-Hamiltonian instead of just bi-Hamiltonian.

 The complex 2-form $\Omega_{ab}$ is well def\/ined but it is not symplectic. The kernel is two-dimensional and it is invariant under the action of $\Ga_K$
\begin{gather*}
 [\operatorname{Ker} \Omega_{ab},\Ga_K] \subset \operatorname{Ker} \Omega_{ab}.
\end{gather*}

The coordinate expressions of $\Omega_{ab1}$ and $ \Omega_{ab2}$ are
\begin{gather*}
 \Omega_{ab1} = \frac{2 J}{(a^2\!+b^2)^2}
 (\al_{13} da \wedge dp_a + \al_{14} da \wedge dp_b + \al_{23} db \wedge dp_a + \al_{24} db \wedge dp_b + \al_{34} dp_a \wedge dp_b ) ,
\\
 \Omega_{ab2} = \frac{2 g}{(a^2\!+b^2)^2}
 ( \be_{13} da \wedge dp_a + \be_{14} da \wedge dp_b + \be_{23} db \wedge dp_a + \be_{24} db \wedge dp_b ) ,
\end{gather*}
with $\alpha_{ij}$ and $\beta_{ij}$ being given by
\begin{alignat*}{3}
& \al_{13} = \bigl(2 g b - a p_a p_b - b p_b^2 \bigr) b , \qquad&&
 \al_{14} = -\bigl(2 g a - a p_a^2 - b p_a p_b \bigr) b , & \\
& \al_{23} = \bigl(-2 g b + a p_a p_b + b p_b^2 \bigr) a , \qquad&&
 \al_{24} = \bigl(2 g a - a p_a^2 - b p_a p_b \bigr) a , &\\
& \al_{34} = 2 J \big(a^2+b^2\big) ,&&&
\end{alignat*}
and
\begin{alignat*}{3}
& \be_{13} = \bigl( 2 a b p_a - a^2 p_b - b^2 p_b \bigr) b , \qquad&&
 \be_{14} = \bigl( 2 a b p_b - a^2 p_a - b^2 p_a \bigr) b ,& \\
 &\be_{23} = \bigl(-2 a b p_a + a^2 p_b + b^2 p_b \bigr) a , \qquad&&
 \be_{24} = \bigl(-2 a b p_b - a^2 p_a - b^2 p_a \bigr) a . &
\end{alignat*}

We close this section with the following properties:
\begin{itemize}\itemsep=0pt
\item[(i)] The two real 2-forms are not symplectic. In fact we have verif\/ied that
$\Omega_1 \wedge \Omega_1 = 0$,
$\Omega_2 \wedge \Omega_2 = 0$, and also
$\Omega_1 \wedge \Omega_2 = 0$.

 \item[(ii)] These two 2-forms, $\Omega_{ab1}$ and $\Omega_{ab2}$, determine two recursion operators
($(1,1)$ tensor f\/ields) $R_{ab1}$ and $R_{ab2}$ def\/ined as
\begin{gather*}
 \Omega_{ab1}(X,Y) = \om_0(R_{ab1}X,Y) ,\qquad \Omega_{ab2}(X,Y) = \om_0(R_{ab2}X,Y) ,
\end{gather*}
or in an equivalent way
\begin{gather*}
 R_{ab1} = \wh{\om_0}^{-1} \circ \wh{\Omega_{ab1}} ,\qquad
 R_{ab2} = \wh{\om_0}^{-1} \circ \wh{\Omega_{ab2}} .
\end{gather*}
As in Section~\ref{section23}, a consequence of the singular character of $\Omega_{ab1}$ and $\Omega_{ab2}$ is that
\begin{gather*}
 \det [R_{ab1}] =\det [R_{ab2}] = 0.
\end{gather*}

\item[(iii)]
If we denote by $Z_3$ and $Z_4$ the Hamiltonian vector f\/ields (with respect to the canonical
symplectic form~$\om_0$) of the integrals~$K_3$ and~$K_4$, then the dynamical vector f\/ield
$\Ga_K$ is orthogonal to $Z_4$ with respect to the structure $\Omega_1$ and it is also orthogonal
to $Z_3$ with respect to the structure~$\Omega_2$, that is,
\begin{gather*}
 i(\Ga_K) i(Y_4) \Omega_{ab1} = 0 , \qquad i(\Ga_K) i(Y_3) \Omega_{ab2} = 0 .
\end{gather*}
\end{itemize}

\section{Final comments}\label{section4}

The Kepler problem is separable in two dif\/ferent coordinate systems, and because of this, it is superintegrable with quadratic integrals of motion.
Now we have proved that this super\-integrability is directly related with the existence of certain complex functions possessing very interesting Poisson bracket properties and also that these functions are also related with the existence of quasi-bi-Hamiltonian structures.

We f\/inalize with some questions for future work.
First, as stated in the Introduction, quasi-bi-Hamiltonian structures is a matter that still remain as slightly studied (in contrast to the bi-Hamiltonian systems); so the particular case of the Kepler problem can be a good motivation to undertake a better study of these systems.
Second, the existence of these complex functions is not a specif\/ic characteristic of the Kepler problem;
in fact, it has been proved that the superintegrability of certain systems recently studied
(as the isotonic oscillator or the TTW or PW systems)
\cite{CarRanSa05, Ra12JPaTTW, Ra13JPaPW, Ra14JPaTTWk, Ra15PhysLet, RaRodSant10}
is also related with such a class of complex functions.
Therefore, an interesting open question is whether these other superintegrable systems are also endowed with bi-Hamiltonian structures or with quasi-bi-Hamiltonian structures; the starting point for this study must be a deeper analysis of the properties of these complex functions making use of the geometric (symplectic) formalism as an approach.

\subsection*{Acknowledgments}

This work has been supported by the research projects MTM--2012--33575 (MICINN, Madrid) and DGA-E24/1 (DGA, Zaragoza).

\pdfbookmark[1]{References}{ref}
\LastPageEnding

\end{document}